\documentclass[conference,a4paper]{IEEEtran}
\IEEEoverridecommandlockouts
\usepackage{amsmath}
\usepackage{amssymb}
\usepackage{amsfonts}
\usepackage{algorithm}
\usepackage{dsfont}
\usepackage{float}
\usepackage{cite}
\usepackage{graphicx}
\usepackage{epsfig}
\usepackage{subfigure}
\usepackage{psfrag}
\usepackage{xcolor}
\usepackage{url}
\usepackage[colorlinks,linkcolor=black,urlcolor=black,anchorcolor=black,citecolor=black,hyperfootnotes=true]{hyperref}
\usepackage{bm}

\def\BibTeX{{\rm B\kern-.05em{\sc i\kern-.025em b}\kern-.08em
    T\kern-.1667em\lower.7ex\hbox{E}\kern-.125emX}}

\setlength{\topmargin}{-0.7in}
\setlength{\textheight}{9.5in}

%
%

\newtheorem{remark}{Remark}
\newtheorem{theorem}{Theorem}
\newtheorem{proof}{Proof}

\begin{document}

\title{
		A New Channel Estimation Strategy in Intelligent Reflecting Surface Assisted Networks
	}
	\author{\IEEEauthorblockN{Rui Wang${}^\dagger{}$, Liang Liu${}^\ddagger$, Shuowen Zhang${}^\ddagger$, and Changyuan Yu${}^\ddagger$}
		\IEEEauthorblockA{${}^\dagger$ Shenzhen Research Institute, The Hong Kong Polytechnic University. Email: wangrrui2016@gmail.com \\ ${}^\ddagger$ EIE Department, The Hong Kong Polytechnic University. Email: $\{$liang-eie.liu,shuowen.zhang,changyuan.yu$\}$@polyu.edu.hk}
	}
	\maketitle

\begin{abstract}
Channel estimation is the main hurdle to reaping the benefits promised by the intelligent reflecting surface (IRS), due to its absence of ability to transmit/receive pilot signals as well as the huge number of channel coefficients associated with its reflecting elements. Recently, a breakthrough was made in reducing the channel estimation overhead by revealing that the IRS-BS (base station) channels are common in the cascaded user-IRS-BS channels of all the users, and if the cascaded channel of one typical user is estimated, the other users' cascaded channels can be estimated very quickly based on their correlation with the typical user's channel \cite{b5}. One limitation of this strategy, however, is the waste of user energy, because many users need to keep silent when the typical user's channel is estimated. In this paper, we reveal another correlation hidden in the cascaded user-IRS-BS channels by observing that the user-IRS channel is common in all the cascaded channels from users to each BS antenna as well. Building upon this finding, we propose a novel two-phase channel estimation protocol in the uplink communication. Specifically, in Phase I, the correlation coefficients between the channels of a typical BS antenna and those of the other antennas are estimated; while in Phase II, the cascaded channel of the typical antenna is estimated. In particular, all the users can transmit throughput Phase I and Phase II. Under this strategy, it is theoretically shown that the minimum number of time instants required for perfect channel estimation is the same as that of the aforementioned strategy in the ideal case without BS noise. Then, in the case with BS noise, we show by simulation that the channel estimation error of our proposed scheme is significantly reduced thanks to the full exploitation of the user energy.
\end{abstract}

\begin{figure}
	\begin{center}
		\subfigure[Correlation among different users' channels found in \cite{b5}]
		{\scalebox{0.5}{\includegraphics*{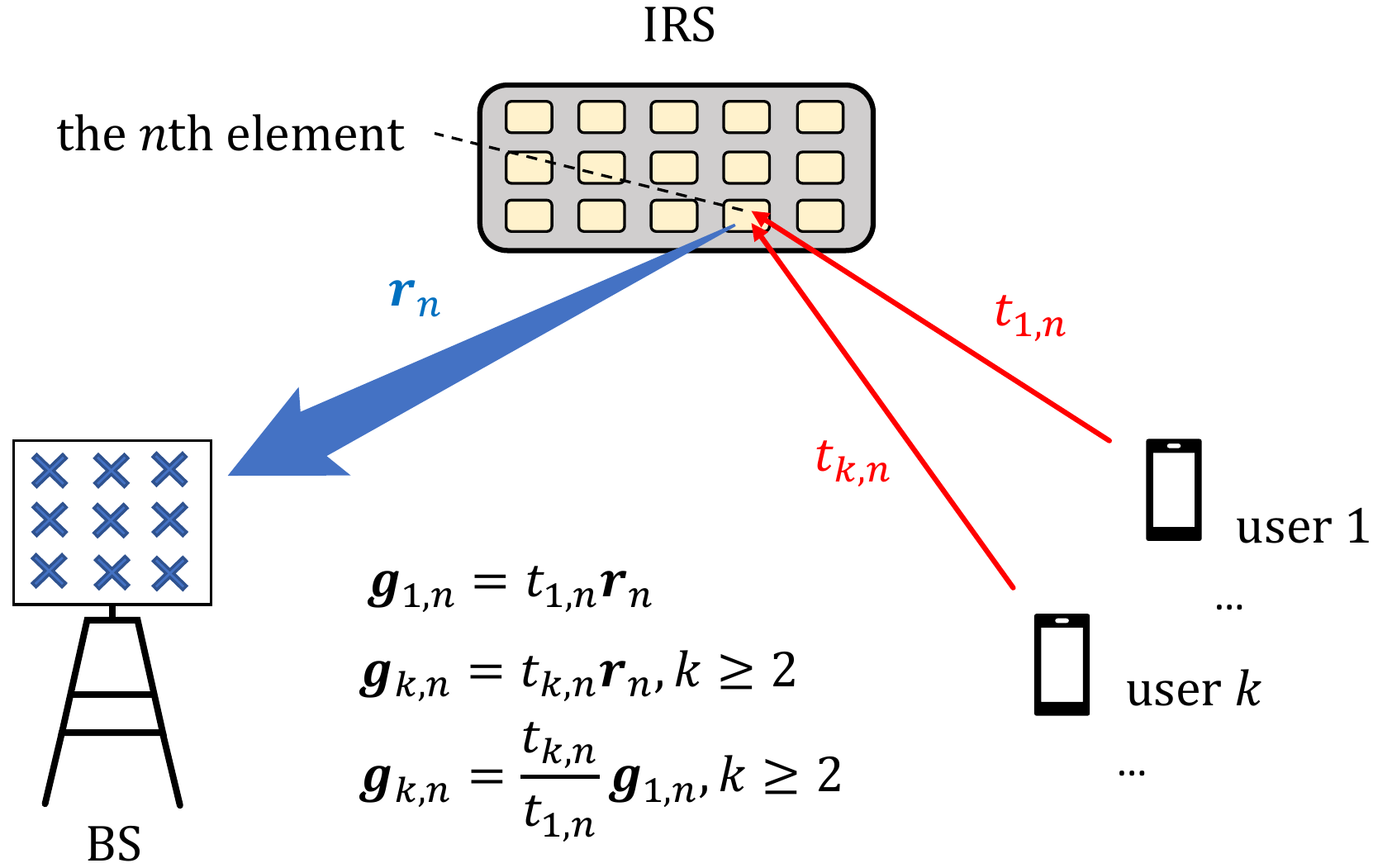}}}
		\subfigure[Correlation among different antennas' channels found in this paper]
		{\scalebox{0.45}{\includegraphics*{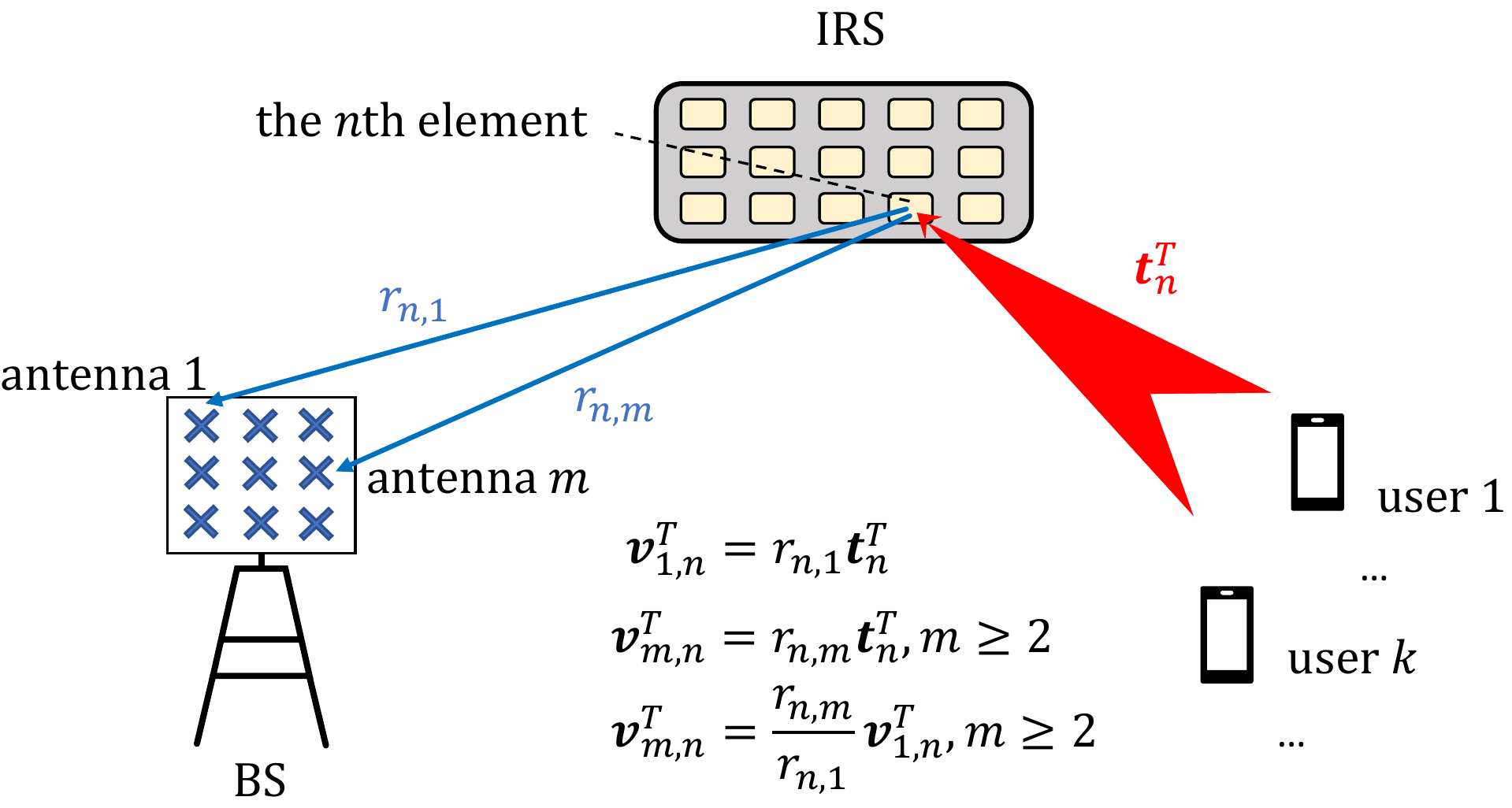}}}
	\end{center}\vspace{-10pt}
	\caption{Various correlations in the cascaded user-IRS-BS channels.}\label{system_model}\vspace{-10pt}
\end{figure}
\section{Introduction}
Recently, there is a large body of research in investigating the fundamental limits of intelligent reflecting surface (IRS) assisted wireless networks \cite{Wu18,Huang19,Zhang19,Schober19,Shuowen20,Pan20,Guo,Yang19}, where each IRS may modify the wireless channels between the base station (BS) and users to be more favorable for communication via inducing phase shift to the incident signal at each reflecting element \cite{Liaskos08,Renzo19,Zhang21}. However, one challenge to approach these limits in practice lies in channel estimation. First, the passive IRS cannot transmit/receive pilot signals actively, making it difficult to estimate the cascaded user-IRS-BS channels, which are products of the user-IRS channels and the IRS-BS channels. Second, the overhead for estimating the cascaded user-IRS-BS channels in each coherence block scales with the number of IRS elements, which is very large in practice.

Recently, \cite{b5} revealed that in a multi-user multi-antenna system, there is a great amount of redundancy in the cascaded user-IRS-BS channels. Specifically, as shown in Fig. \ref{system_model} (a), all the users share the same IRS-BS channel components, and the cascaded channel vector of a user is thus a scaled version of that of any other user. To exploit this correlation to reduce the redundancy in channels, \cite{b5} proposed a novel channel estimation strategy, where the cascaded channel of a typical user is estimated first, and the scaling coefficients, rather than the whole channel vectors, of the other users are estimated next. It was shown that the channel estimation overhead is significantly reduced under this novel scheme.

However, one issue of this scheme is that all the other users cannot transmit their pilot signals when the BS is estimating the typical user's cascaded channel. Since each user has its own power budget, a lot of power is wasted under this scheme, leading to higher channel estimation mean-squared error (MSE), which motivates us to tackle this issue in this paper. Specifically, we consider the uplink communication of an IRS-assisted network, where a multi-antenna BS serves multiple single-antenna users. Our key observation is an alternative correlation hidden in cascaded user-IRS-BS channels as shown in Fig. \ref{system_model} (b): all the BS antennas share the same user-IRS channel components, and the cascaded channel vector of an antenna is thus a scaled version of that of any other antenna. Based on this finding, we propose a new protocol with two phases to estimate the cascaded user-IRS-BS channels. In Phase I, the scaling coefficients, rather than the whole channel vectors, of all the antennas other than a typical antenna are estimated; then, in Phase II, the cascaded channel of the typical antenna is estimated. Note that all the users can transmit throughout Phases I and II under our protocol, because we estimate the cascaded channels antenna by antenna, rather than user by user as in \cite{b5}. Interestingly, it is theoretically shown that in the case without BS noise, the minimum number of time instants required for perfect channel estimation achieved by our strategy is the same as that achieved by the strategy in \cite{b5}. Moreover, in the case with BS noise, the linear minimum mean-squared error (LMMSE) estimators are proposed to reduce the channel estimation MSE. It is shown via simulation that in a practical scenario with BS noise, the channel estimation error of our strategy is much lower than that of the strategy in \cite{b5}, because all the users' power is fully exploited.

The rest of this paper is organized as follows. Section \ref{sec:System Model} introduces the system model. Section \ref{sec:Review of the Channel Training and Estimation Protocol} presents the channel estimation strategy proposed in \cite{b5} and its main limitation. Section \ref{sec:A New Channel Training and Estimation Protocol} proposes a new channel estimation strategy to overcome the limitation in the above strategy. Sections \ref{sec:Minimum Duration for Perfect Channel Estimation under the Case without BS Noise} and \ref{sec:LMMSE Channel Estimation Under the Case with BS Noise} characterize the minimum time duration for perfect channel estimation in the ideal case without BS noise and the LMMSE channel estimators in the practical case with BS noise, respectively. Section \ref{sec:Numerical Results} presents the numerical results. Finally, Section \ref{sec:Conclusions} concludes this paper.

\section{System Model}\label{sec:System Model}
We consider the uplink communication in an IRS-assisted multi-user communication network with $K>1$ single-antenna users and a BS equipped with $M>1$ antennas. An IRS equipped with $N>1$ passive reflecting elements is deployed for enhancing the users' communication performance, as shown in Fig. \ref{system_model}. For all the channels (i.e., user-BS channels, user-IRS channels, and IRS-BS channel), we consider a quasi-static block fading model, where the channels remain approximately constant in each coherence block. Specifically, we denote the direct channel from the $k$th user to the BS as ${\bm h}_{k}\in\mathbb{C}^{M\times 1}, k=1,\cdots,K$. Moreover, we let ${\bm r}_n=\left[ r_{n,1},\cdots, r_{n,M}\right]^T\in\mathbb{C}^{M\times 1}$ and $t_{k,n}\in\mathbb{C}$ denote the channel from the $n$th IRS element to the BS and that from the $k$th user to the $n$th IRS element, respectively,  $n=1,\cdots,N$, $k=1,\cdots,K$. The cascaded user-IRS-BS channel from the $k$th user to the BS via the $n$th IRS element can be denoted as ${\bm g}_{k,n}=t_{k,n}{\bm r}_n\in \mathbb{C}^{M\times 1}, \forall n,k$.

To design the IRS passive beamforming for improving the communication performance, accurate channel state information (CSI) of both the direct user-BS channels ${\bm h}_k$'s and the cascaded user-IRS-BS channels ${\bm g}_{k,n}$'s is needed \cite{b5}. During the channel training phase, we let $\sqrt{p}x_{i,k}\in \mathbb{C}$ denote the pilot signal sent from user $k$ at time instant $i$, where either $|x_{i,k}|^2=1$ or $x_{i,k}=0$ (no pilot symbol) and $p$ denotes the pilot signal power. Moreover, the overall pilot signals of all the users at time instant $i$ is denoted as ${\bm x}_i=[x_{i,1},\cdots,x_{i,K}]^T$. Let $\phi_{n,i}\in \mathbb{C}$ with $|\phi_{n,i}|=1$ denote the reflection coefficient at IRS element $n$ at time instant $i$. The signal received by the BS at time instant $i$ is the superposition of the signals from the users' direct channels and the reflected ones via the IRS, which is expressed as
\begin{align}\label{revsig}
  \bar{{\bm y}}_i  =\sum_{k=1}^{K}{\bm h}_{k}\sqrt{p}x_{i,k}+\sum_{k=1}^{K}\sum_{n=1}^{N}\phi_{n,i}{\bm g}_{k,n}\sqrt{p}x_{i,k}+{\bm z}_{i},
\end{align}
where ${\bm z}_i \sim\mathcal{CN}\left({\bm 0},\sigma^2{\bm I}_M\right)$ denotes the circularly symmetric complex Gaussian (CSCG) noise of the BS at time instant $i$.

Note that the direct user-BS channels ${\bm h}_k$'s can be easily obtained via conventional channel estimation techniques by turning off all the IRS elements. As a result, this paper assumes that ${\bm h}_k$'s have been estimated perfectly such that we can focus on how to tackle the challenges in estimating the cascaded user-IRS-BS channels. Note that with perfect knowledge of ${\bm h}_k$'s, the signal contributed by the direct link in (\ref{revsig}) can be removed, leading to the following signal model useful for estimating the cascaded user-IRS-BS channels:

\begin{align}\label{revsig4}
  {\bm y}_i  =\sum_{k=1}^{K}\sum_{n=1}^{N}\phi_{n,i}{\bm g}_{k,n}\sqrt{p}x_{i,k}+{\bm z}_{i}, ~ \forall i.
\end{align}

It is worth noting that a novel scheme for estimating the reflected channel ${\bm g}_{k,n}$'s was proposed in \cite{b5}, by exploiting the correlation among the user-IRS-BS channels of different users. In the following, we will first review the scheme proposed in \cite{b5} and identify its limitation. Motivated by that, we will introduce an alternative channel training and estimation scheme that achieves higher estimation accuracy by exploiting a newly found correlation in the user-IRS-BS channels.

\section{Review of Channel Estimation Protocol in \cite{b5}}\label{sec:Review of the Channel Training and Estimation Protocol}

In \cite{b5}, a useful correlation was revealed among the user-IRS-BS channels of different users. Specifically, as illustrated in Fig. \ref{system_model} (a), since ${\bm g}_{1,n}=t_{1,n}{\bm r}_n$ and ${\bm g}_{k,n}=t_{k,n}{\bm r}_n$, $\forall k\geq 2$, the user-IRS-BS channel of user $k\geq 2$ can be expressed as a scaled version of that of user $1$ (denoted as the typical user):
\begin{equation}\label{corr}
  {\bm g}_{k,n}=\lambda_{k,n}{\bm g}_{1,n}, ~~k=2,\cdots,K, n=1,\cdots,N,
\end{equation}
where $\lambda_{k,n}$ denotes the correlation coefficient and is given as $\lambda_{k,n}=\frac{t_{k,n}}{t_{k,1}}$. This relation holds because from the user's perspective, the IRS-BS channels ${\bm r}_n$'s are common for various users. Based on this correlation, the $MNK$ coefficients in ${\bm g}_{k,n}$'s can be fully characterized by the user-IRS-BS channels of the typical user, i.e., ${\bm g}_{1,n}$'s,  and the correlation coefficients of the remaining users, i.e., $\lambda_{k,n}$'s, $k=2,...,K$. Therefore, only $MN+(K-1)N$ coefficients need to be estimated, which is much lower than $MNK$. To exploit this correlation, \cite{b5} proposed a novel protocol. First, only the typical user, i.e., user $1$, transmits its pilot signal to the BS such that ${\bm g}_{1,n}$'s can be estimated without the interference from other users' pilots. After estimating ${\bm g}_{1,n}$'s, user $2$ to user $K$ will transmit their pilots to the BS such that $\lambda_{k,n}$'s can be estimated. It was shown in \cite{b5} that in the ideal case without noise at the BS, the minimum number of time instants to perfectly estimate ${\bm g}_{1,n}$'s and $\lambda_{k,n}$'s is
\begin{equation}\label{minoverload}
  \tau_{\rm min}=N+{\rm max}(K-1,\lceil(K-1)N/M\rceil).
\end{equation}

Despite the pioneering contributions made in \cite{b5}, one limitation of the proposed scheme is that only the typical user transmits its pilot in the phase of estimating ${\bm g}_{1,n}$'s. Note that the transmit power of each user $k\geq 2$ in Phase I cannot be saved to increase its transmit power in Phase II, because each user has a peak power constraint in practice. Moreover, the power of the other users cannot be transferred to the typical user in Phase I. As a result, the transmit power available for user $2$ to user $K$ is wasted in Phase I. To harness the benefits brought by exploiting channel correlation yet overcoming the aforementioned limitation, in this paper, we reveal a new type of correlation among the user-IRS-BS channels, based on which we propose a novel channel estimation protocol that allows simultaneous transmission of the uplink pilots from all users throughout the channel training phase.

\section{A New Channel Estimation Protocol}\label{sec:A New Channel Training and Estimation Protocol}

In this section, we present a new channel estimation scheme that can achieve the same minimum channel training duration for perfect estimation given in (\ref{minoverload}) in the case without noise at the BS and potentially reduce the estimation MSE in the case with noise at the BS. Before introducing the estimation scheme, we reveal another correlation relationship, i.e., the correlation between the reflected channels from all the users to different antennas. Note that the reflected channel from all the users to the $m$th antenna via the $n$th IRS element can be expressed as \begin{align}\label{eqn:antenna}
{\bm v}_{m,n}^T=r_{n,m}{\bm t}_{n}^T\in\mathbb{C}^{1\times K}, ~ \forall m,n
\end{align}where ${\bm t}_{n}^T=\left[ t_{1,n},\cdots,t_{K,n}\right]$ denotes the channel from all the users to IRS element $n$. As a result, the relation between ${\bm v}_{m,n}$'s and ${\bm g}_{k,n}$'s can be expressed as
\begin{align}\label{eqn:re}
{\bm v}_{m,n}^T=[g_{1,n,m}, \cdots, g_{K,n,m}], ~ \forall m,n,
\end{align}where $g_{k,n,m}$ denotes the $m$th element in ${\bm g}_{k,n}$. It is observed that a common vector ${\bm t}_n^T$ exists in ${\bm v}_{m,n}^T$'s, $\forall m$, because the user-IRS channel is the same for different antennas, as illustrated in Fig. \ref{system_model} (b). Based on this interesting correlation, each reflected channel vector ${\bm v}_{m,n}^T$ for antenna $m\geq 2$ can be expressed as a scaled version of ${\bm v}_{1,n}^T$ for antenna 1 (denoted as the typical antenna):
\begin{equation}\label{antecorr2}
  {\bm v}_{m,n}^T=\beta_{m,n}{\bm v}_{1,n}^T, ~ m=2,\cdots,M, n=1,\cdots,N,
\end{equation}
where
\begin{equation}\label{beta}
  \beta_{m,n}=\frac{r_{n,m}}{r_{n,1}}, ~ m=2,\cdots,M, n=1,\cdots,N.
\end{equation}

Based on \eqref{eqn:re} and \eqref{antecorr2}, the signal received by the BS at time instant $i$ for estimating the reflected channels shown in (\ref{revsig4}) reduces to:
\begin{align}\label{revsig2}
  {\bm y}_i & =\sqrt{p}\sum_{n=1}^{N}\phi_{n,i}[{\bm v}_{1,n},
    \beta_{2,n}{\bm v}_{1,n}, \cdots,
    \beta_{M,n}{\bm v}_{1,n}]^T{\bm x}_{i}+{\bm z}_{i} \notag \\
  & = \sqrt{p}\sum_{n=1}^{N}\phi_{n,i}{\bm \beta}_{n}{\bm v}_{1,n}^T{\bm x}_{i}+{\bm z}_i,
\end{align}
where ${\bm \beta}_{n}=\left[ 1,{\beta}_{2,n},\cdots, {\beta}_{M,n}\right]^T$. Therefore, our objective is to estimate antenna $1$'s reflected channel vectors ${\bm v}_{1,n}$'s and the correlation coefficients $\beta_{m,n}$'s for antenna 2 to antenna $M$ based on ${\bm y}_i$'s. The number of coefficients to be estimated is reduced from $KMN$ to $KN+(M-1)N$ thanks to the use of the correlation shown in (\ref{antecorr2}).

\begin{figure}[t]
  \centering
  \includegraphics[width=7.8cm]{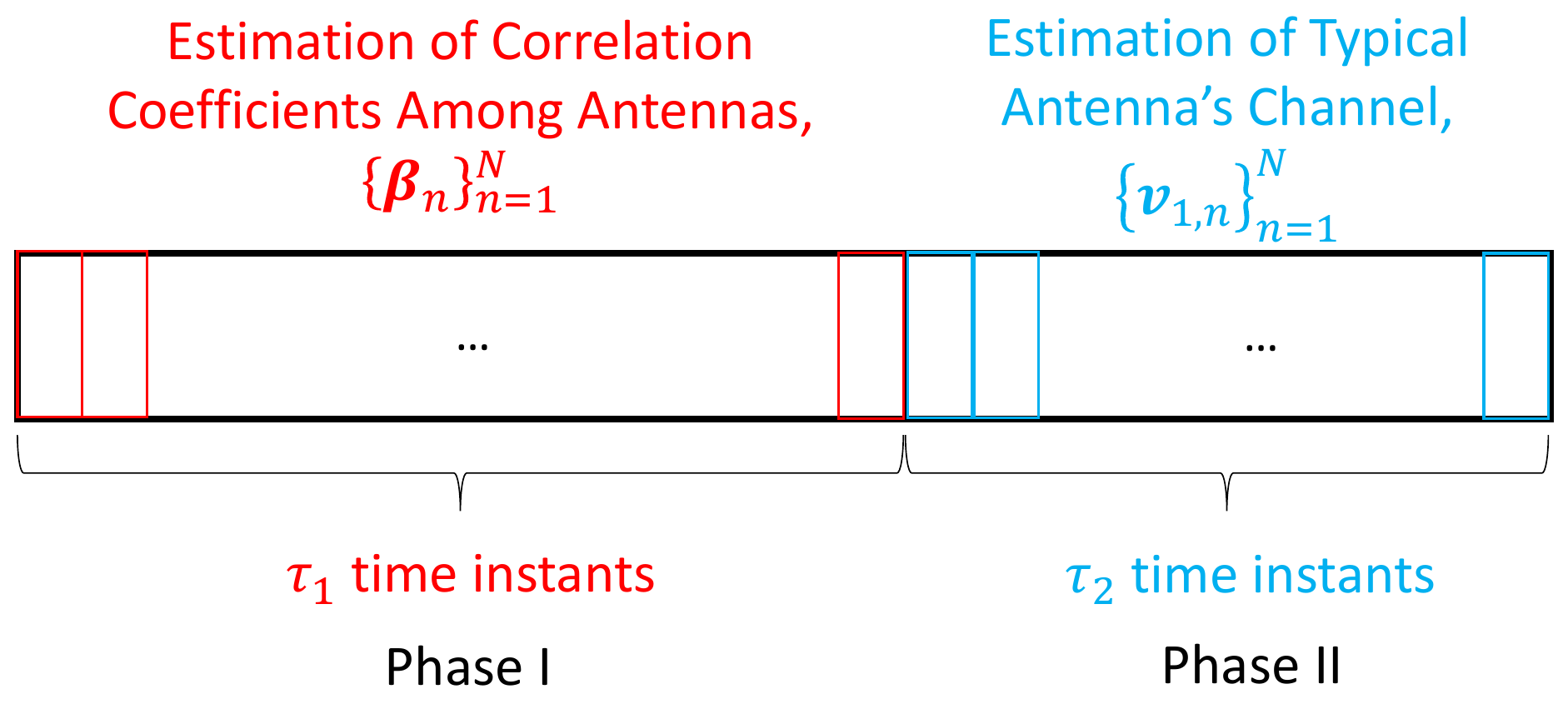}\vspace{-10pt}
  \caption{Illustration of the new channel estimation protocol.}\label{flow_graph}\vspace{-10pt}
\end{figure}

By exploiting the new correlation, we propose a two-phase channel training and estimation protocol as shown in Fig. \ref{flow_graph}. Specifically, in Phase I with $\tau_1$ time instants, the correlation coefficients among different BS antennas $\beta_{m,n}$'s are estimated; in Phase II with $\tau_2$ time instants, the channel vectors for the typical antenna ${\bm v}_{1,n}$'s are then estimated. In the rest of this paper, we will first study this protocol in the theoretical case without noise at the BS to show the minimum number of time instants for perfect channel estimation, and then turn to the practical case with noise at the BS to show how to design the LMMSE channel estimators for reducing the estimation MSE.

\section{Minimum Duration for Perfect Channel Estimation under the Case without BS Noise}\label{sec:Minimum Duration for Perfect Channel Estimation under the Case without BS Noise}
In this section, we characterize the performance limit of our proposed protocol in the case without BS noise.

\subsection{Phase I: Estimation of Correlation Coefficients of Different Antennas}
In this phase, we aim to estimate the correlation coefficients, i.e., $\beta_{m,n}$, $m=2,...,M$ and $n=1,...,N$. According to (\ref{revsig2}), in the case without noise, the received signal of antenna $m$ at time instant $i$ in Phase I can be re-written as
\begin{equation}\label{revS1m}
  \tilde{y}_{m,i}^{\rm I}=\sum_{n=1}^{N}\phi_{n,i}\alpha_{m,n,i}, ~ m=1,\cdots,M, i=1\cdots,\tau_1,
\end{equation}where
\begin{equation}\label{alpha_m}
  \alpha_{m,n,i}=
  \begin{cases}
    $$ \sqrt{p}{\bm v}_{1,n}^T{\bm x}_i$$, & \mbox{$m=1$}, \\
    $$\sqrt{p}\beta_{m,n}{\bm v}_{1,n}^T{\bm x}_i$$, & \mbox{$m\geq 2$},
  \end{cases} ~ \forall n, i.
\end{equation}Note that if $\alpha_{m,n,i}$'s can be estimated based on $\tilde{y}_{m,i}^{\rm I}$'s, the correlation coefficient for antenna $m\geq 2$ and IRS element $n$ can be obtained via
\begin{align}\label{alpha}
  \beta_{m,n}=\frac{\alpha_{m,n,i}}{\alpha_{1,n,i}}, ~ i=1,\cdots,\tau_1.
\end{align}

The estimation of $\alpha_{m,n,i}$'s requires to recover $MN\tau_1$ coefficients. Note that according to (\ref{alpha}), we do not need to estimate all these coefficients, since given any $m$ and $n$, one particular pair of $\alpha_{m,n,\bar{i}}$ and $\alpha_{1,n,\bar{i}}$ for some $\bar{i}$ is sufficient to estimate $\beta_{m,n}$. In other words, estimating $MN$ coefficients among all the $MN\tau_1$ coefficients in $\alpha_{m,n,i}$'s is sufficient. The challenge is given $\tilde{y}_{m,i}^{\rm I}$'s, how to just estimate a subset of $\alpha_{m,n,i}$'s to reduce the training time $\tau_1$.

To tackle this challenge, we propose to set identical pilot signals over all time instants of Phase I, i.e.,
\begin{equation}\label{x_i=x}
  {\bm x}_i={\bm x}, ~ i=1,\cdots,\tau_1.
\end{equation}In this case, the received signal given in (\ref{revS1m}) reduces to
\begin{equation}\label{revS1m1}
  \tilde{y}_{m,i}^{\rm I}=\sum_{n=1}^{N}\phi_{n,i}\bar{\alpha}_{m,n}, ~ m=1,\cdots,M, i=1\cdots,\tau_1,
\end{equation}where
\begin{equation}\label{alpha_m1}
  \bar{\alpha}_{m,n}=
  \begin{cases}
    $$ \sqrt{p}{\bm v}_{1,n}^T{\bm x}$$, & \mbox{$m=1$}, \\
    $$\sqrt{p}\beta_{m,n}{\bm v}_{1,n}^T{\bm x}$$, & \mbox{$m\geq 2$},
  \end{cases} ~ \forall n.
\end{equation}Then, we can first estimate $\bar{\alpha}_{m,n}$'s with $MN$ coeffcients based on the received signals (without considering how to reduce the redundancy in $\alpha_{m,n,i}$'s) and then set
\begin{align}\label{alpha1}
  \beta_{m,n}=\frac{\bar{\alpha}_{m,n}}{\bar{\alpha}_{1,n}}, ~ \forall m,n.
\end{align}

Given (\ref{x_i=x}), the overall received signal at antenna $m$ over $\tau_1$ time instants of Phase I is given by
\begin{align}\label{overallrevS1m}
  \tilde{\bm y}_m^{\rm I} = \left[ \tilde{y}_{m,1}^{\rm I},\cdots,\tilde{y}_{m,\tau_1}^{\rm I}\right]^T
   =  {\bm \Phi}^{\rm I}\bar{{\bm \alpha}}_{m}, ~ \forall m,
\end{align}
where
\begin{equation}\label{Phi1}
  {\bm \Phi}^{\rm I}=\left[
  \begin{array}{ccc}
    \phi_{1,1} & \cdots & \phi_{N,1} \\
    \vdots & \ddots & \vdots \\
    \phi_{1,\tau_1} & \cdots & \phi_{N,\tau_1}
  \end{array}
  \right],
\end{equation}and $\bar{{\bm \alpha}}_{m}=\left[ \bar{\alpha}_{m,1},\cdots, \bar{\alpha}_{m,N}\right]^T$. Then, the received signal at the BS in Phase I is given by
\begin{equation}\label{overallrevS1}
  \tilde{\bm Y}^{\rm I} =\left[ \tilde{\bm y}_{1}^{\rm I},\cdots,\tilde{\bm y}_{M}^{\rm I}\right]= {\bm \Phi}^{\rm I}\left[ \bar{{\bm \alpha}}_{1},\cdots, \bar{{\bm \alpha}}_{M}\right].
\end{equation}Since the numbers of equations and variables are $M\tau_1$ and $MN$, respectively, the minimum number of time instants to perfectly estimate $\bar{\alpha}_{m,n}$'s and thus $\beta_{m,n}$'s based on (\ref{alpha1}) is
\begin{align}\label{eqn:tau1}
\tau_1^\ast=N.
\end{align}In this case, we can set ${\bm \Phi}^{\rm I}$ based on the discrete Fourier transform (DFT) matrix via adjusting the IRS reflecting coefficients such that $({\bm \Phi}^{\rm I})^H{\bm \Phi}^{\rm I}=N{\bm I}$ and thus $[ \bar{{\bm \alpha}}_{1},\cdots, \bar{{\bm \alpha}}_{M}]=({\bm \Phi}^{\rm I})^H\tilde{\bm Y}^{\rm I}/N$.

\begin{remark}
Note that under the protocol proposed in \cite{b5}, in Phase I, user 2 to user $K$ do not transmit their pilot signals to the BS such that the BS can estimate the typical user's channels ${\bm g}_{1,n}$'s without interference. However, under our proposed protocol, it is observed from (\ref{x_i=x}) that all the users can transmit their pilot signals to the BS. Later, it will be shown in Section \ref{sec:Numerical Results} that such a full utilization of the user transmit power in Phase I will significantly reduce the channel estimation MSE compared to the scheme proposed in \cite{b5} in the case with BS noise.
\end{remark}

\subsection{Phase II: Estimation of Reflected Channels for the Typical Antenna}
In the second phase, we aim to estimate the channel vectors $\{{\bm {v}}_{1,n}\}_{n=1}^N$ for the typical antenna, i.e., antenna 1. Before introducing the scheme in Phase II, we want to emphasize that after $\bar{\alpha}_{m,n}$'s are estimated in Phase I, we already have some useful information about ${\bm {v}}_{1,n}$'s as follows:
\begin{align}\label{eqn:v}
\sqrt{p}{\bm v}_{1,n}^T{\bm x}=\bar{\alpha}_{1,n}, ~ \forall n.
\end{align}This information should be used in Phase II to estimate ${\bm {v}}_{1,n}$'s.

In Phase II, the signal received by antenna $m$, $m=1,\cdots,M$, at time instant $i$, $i=\tau_1+1\cdots,\tau_1+\tau_2$, is given as
\begin{align}\label{revS2m}
  \tilde{y}_{m,i}^{\rm II}=\sqrt{p}\sum_{n=1}^{N}\phi_{n,i}\beta_{m,n}{\bm v}_{1,n}^T{\bm x}_i.
\end{align}Define the overall signal received over Phase II as
\begin{align}
\tilde{\bm y}^{\rm II}=[\tilde{y}_{1,\tau_1+1}^{\rm II},\cdots,\tilde{y}_{M,\tau_1+1}^{\rm II}, \cdots, \tilde{y}_{1,\tau_1+\tau_2}^{\rm II},\cdots,\tilde{y}_{M,\tau_1+\tau_2}^{\rm II}]^T.
\end{align}Since both $\bar{\alpha}_{1,n}$'s in Phase I and $\tilde{\bm y}^{\rm II}$ in Phase II contain information about ${\bm {v}}_{1,n}$'s, define a vector ${\bm \delta}$ as
\begin{equation}\label{overallrevS2}
  {\bm \delta} = [ (\tilde{\bm y}^{\rm II})^T, \bar{{\bm \alpha}}_{1}^T]^T.
\end{equation}It can be shown that
\begin{equation}\label{overallS2}
  {\bm \delta}={\bm\Theta}\left[{\bm v}_{1,1}^T,\cdots,{\bm v}_{1,N}^T\right]^T,
\end{equation}
where $\bm\Theta\in\mathbb{C}^{(\tau_2M+N)\times KN}$ is given in \eqref{Theta} on the top of the next page. Our aim is to design the user pilot signals $x_{i,k}$'s and the IRS reflection coefficients $\phi_{n,i}$'s in Phase II such that ${\bm {v}}_{1,n}$'s can be estimated based on (\ref{overallS2}) with the minimum number of time instants. Mathematically, this requires to find the minimum $\tau_2$ such that rank$(\bm\Theta)=KN$, i.e., all the columns are linearly independent with each other.
\begin{figure*}[t]
\setlength{\columnsep}{0.1pt}
\begin{align}\label{Theta}
  {\bm \Theta} &=
  \sqrt{p}\left[
  \begin{array}{ccccccc}
    \phi_{1,\tau_1+1}{x}_{\tau_1+1,1}{\bm\beta}_{1} & \cdots & \phi_{1,\tau_1+1}{x}_{\tau_1+1,K}{\bm\beta}_{1} & \cdots & \phi_{N,\tau_1+1}{x}_{\tau_1+1,1}{\bm\beta}_{N} & \cdots & \phi_{N,\tau_1+1}{x}_{\tau_1+1,K}{\bm\beta}_{N}\\
    \vdots & \ddots & \vdots & \ddots & \vdots & \ddots & \vdots \\
    \phi_{1,\tau_1+\tau_2}{x}_{\tau_1+\tau_2,1}{\bm\beta}_{1} & \cdots & \phi_{1,\tau_1+\tau_2}{x}_{\tau_1+\tau_2,K}{\bm\beta}_{1} & \cdots & \phi_{N,\tau_1+\tau_2}{x}_{\tau_1+\tau_2,1}{\bm\beta}_{N} & \cdots & \phi_{N,\tau_1+\tau_2}{x}_{\tau_1+\tau_2,K}{\bm\beta}_{N} \\
    {\bm x} & \cdots & {\bm x} &  & {\bm 0} & \cdots & {\bm 0}  \\ {\bm 0} & \cdots & {\bm 0} & & {\bm 0} & \cdots & {\bm 0} \\
   \vdots & \ddots & \vdots & \ddots & \vdots & \ddots & \vdots \\
   {\bm 0} & \cdots & {\bm 0}& & {\bm x} & \cdots & {\bm x}
  \end{array}
  \right]
\end{align}
\hrulefill
\end{figure*}

\begin{theorem}\label{theorem1}
The minimum value of $\tau_2$ to guarantee a perfect estimation of ${\bm {v}}_{1,n}$'s based on (\ref{overallS2}) is
\begin{equation}\label{tau2}
  \tau_2^\ast=\max\left(K-1,\left\lceil \frac{(K-1)N}{M} \right\rceil\right).
\end{equation}
\end{theorem}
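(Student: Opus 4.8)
The plan is to analyze the structure of the matrix $\bm\Theta$ in \eqref{Theta} and determine the smallest $\tau_2$ for which its $KN$ columns become linearly independent. First I would establish the converse (lower bound): $\bm\Theta$ has $\tau_2 M + N$ rows, so a necessary condition for $\mathrm{rank}(\bm\Theta)=KN$ is $\tau_2 M + N \ge KN$, i.e. $\tau_2 \ge (K-1)N/M$, hence $\tau_2 \ge \lceil (K-1)N/M\rceil$. For the other term, I would argue that the bottom $N$ rows of $\bm\Theta$ (the block-diagonal part with ${\bm x}$ entries) only constrain, for each $n$, the single scalar ${\bm v}_{1,n}^T{\bm x}$; so the remaining $K-1$ degrees of freedom in each ${\bm v}_{1,n}$ must be pinned down by the top $\tau_2 M$ rows. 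Because the top rows couple the $n$-blocks only through the $\tau_2$ chosen reflection patterns $\phi_{n,i}$, I would show that if $\tau_2 < K-1$ then for any choice of pilots and phases one can exhibit a nonzero vector in the kernel (e.g. by a counting/dimension argument on each $n$-block separately once the phase coupling is accounted for, or by noting that fewer than $K-1$ independent linear measurements of the $(K-1)$-dimensional complementary subspace of each ${\bm v}_{1,n}$ cannot determine it). Combining the two gives $\tau_2^\ast \ge \max(K-1,\lceil (K-1)N/M\rceil)$.

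For the achievability (direct) part, I would exhibit an explicit design of $\{x_{i,k}\}$ and $\{\phi_{n,i}\}$ over the $\tau_2 = \max(K-1,\lceil (K-1)N/M\rceil)$ time instants of Phase II that makes $\bm\Theta$ full column rank. The natural choice is to let the pilot matrix $[{\bm x}_{\tau_1+1},\dots,{\bm x}_{\tau_1+\tau_2}]$ and the IRS pattern matrix be built from DFT-type (or otherwise orthogonal) matrices, mirroring the construction already used for $\bm\Phi^{\rm I}$ in Phase I. I would then show that with such a design the linear system \eqref{overallS2} splits, after using \eqref{eqn:v}, into a system where for each $n$ the component of ${\bm v}_{1,n}$ along ${\bm x}$ is read off directly from $\bar\alpha_{1,n}$, and the orthogonal complement is recovered from a subsystem whose coefficient matrix (involving the ${\bm\beta}_n$ vectors and the chosen phases/pilots) is invertible precisely when $\tau_2$ meets the stated bound. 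Verifying this invertibility — showing the chosen $\bm\Theta$ has no nontrivial kernel — is the step that carries the real content.

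The main obstacle I anticipate is handling the two regimes in the $\max$ simultaneously, and in particular the coupling introduced by the unknown correlation coefficients ${\bm\beta}_n$ appearing inside $\bm\Theta$. When $M$ is large relative to $K$, the binding constraint is $\tau_2 \ge K-1$ and each time instant already supplies $M \ge \lceil (K-1)N/(K-1)\rceil$ rows, so the issue is purely that each ${\bm v}_{1,n}$ needs $K-1$ independent non-$ {\bm x}$-direction measurements; when $M$ is small, many IRS elements' contributions are superimposed in each received vector and one must ensure the $\lceil (K-1)N/M\rceil$ instants are enough to disentangle all $N$ blocks at once. I would reconcile these by writing $\bm\Theta^H\bm\Theta$ (or a suitable row-reduced form) and showing positive-definiteness via a block argument: the bottom rows kill the ${\bm x}$-directions, and on the complementary $(K-1)N$-dimensional space the Gram matrix of the top block is, under the DFT-type design, a Kronecker-structured matrix whose eigenvalues are bounded away from zero exactly when $\tau_2 M \ge (K-1)N$ and $\tau_2 \ge K-1$. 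I would also need to confirm that the generic choice of ${\bm\beta}_n$ (equivalently, generic channel realizations) does not create an accidental rank deficiency, which follows since rank is lower semicontinuous and the determinant of the relevant submatrix is a nonzero polynomial in the entries.
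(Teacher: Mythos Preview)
Your lower-bound arguments are sound and line up with the paper's. The row-count gives $\tau_2 M + N \ge KN$, and your observation that the top $\tau_2 M$ rows of ${\bm\Theta}$ act on each ${\bm v}_{1,n}$ only through the $\tau_2$ functionals ${\bm x}_i^T{\bm v}_{1,n}$ (so together with the single functional ${\bm x}^T{\bm v}_{1,n}$ from the bottom block one has at most $\tau_2+1<K$ constraints on a $K$-vector when $\tau_2<K-1$) is exactly the mechanism behind the $K-1$ bound. The paper reaches the same conclusions but organizes them by a case split $M\ge N$ versus $M<N$, since in each regime only one term of the $\max$ is binding; your unified treatment is fine.

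The divergence is in achievability, and here your plan is under-specified in a way that matters. The claimed Kronecker structure of ${\bm\Theta}^H{\bm\Theta}$ does not materialize, because the vectors ${\bm\beta}_n$ differ across $n$; the top block of ${\bm\Theta}$ is a Khatri--Rao-type object, not a Kronecker product, and its Gram matrix does not factor cleanly. Your fallback generic-rank argument (the determinant is a nonzero polynomial in the design and channel parameters) is valid in principle, but to show the polynomial is nonzero you must still exhibit \emph{one} concrete instance of pilots, phases, and ${\bm\beta}_n$'s that achieves full column rank, which is the whole difficulty.

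The paper handles this by explicit construction, separately in the two regimes. For $M\ge N$ it sets all $\phi_{n,i}=1$, takes ${\bm x}$ all-ones and $[{\bm x}_{\tau_1+1},\dots,{\bm x}_{\tau_1+K-1}]$ as columns $2,\dots,K$ of a $K\times K$ DFT matrix, solves first for $\eta_{n,i}={\bm v}_{1,n}^T{\bm x}_i$ via the (generically full-rank) matrix of ${\bm\beta}_n$'s, and then for ${\bm v}_{1,n}$ from the DFT pilot matrix. For $M<N$ the paper does \emph{not} attempt a self-contained rank computation; instead it sets user $K$ silent in Phase II, chooses the remaining pilots and IRS phases exactly as in Theorem~2 of \cite{b5}, permutes the columns of ${\bm\Theta}$ into a block-triangular form
\[
\tilde{\bm\Theta}=\begin{bmatrix}\tilde{\bm\Theta}_s & {\bm 0}\\ \{{\bm I}_N\}_{K-1} & {\bm I}_N\end{bmatrix},
\]
and invokes that theorem to conclude $\mathrm{rank}(\tilde{\bm\Theta}_s)=(K-1)N$ when $\tau_2=\lceil (K-1)N/M\rceil$, whence $\mathrm{rank}({\bm\Theta})=KN$. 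So the achievability for the harder regime is obtained by reduction to the prior result in \cite{b5}, not by a direct spectral argument. If you pursue your route, you will need either to replicate that construction or to supply an explicit design whose full-rank property you can verify directly; the Kronecker heuristic will not carry that load.
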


\begin{proof}
Please refer to Appendix \ref{appendix1}.
\end{proof}

According to (\ref{eqn:tau1}) and (\ref{tau2}), under our proposed protocol, in the case without BS noise, the minimum number of time instants to estimate all the reflected channels is still (\ref{minoverload}), which is the same as that achieved by the scheme proposed in \cite{b5}.

\section{LMMSE Channel Estimation Under the Case with BS Noise}\label{sec:LMMSE Channel Estimation Under the Case with BS Noise}

In this section, we will design the LMMSE channel estimators in the case with BS noise. Specifically, after removing the signal contributed by the direct user-BS links, the overall received signal at the BS over Phase I is
\begin{equation}\label{revS11_n}
    {\bm Y}^{\rm I}=\left[ \tilde{\bm y}_{1}^{\rm I},\cdots,\tilde{\bm y}_{M}^{\rm I}\right] = {\bm \Phi}^{\rm I}\left[ \bar{{\bm \alpha}}_{1},\cdots, \bar{{\bm \alpha}}_{M}\right]+{\bm Z}^{\rm I},
\end{equation}
where ${\bm Z}^{\rm I}=\left[ {\bm z}_{1},\cdots,{\bm z}_{\tau_1}\right]^T\sim \mathcal{CN}({\bm 0},\sigma^2 M {\bm I})$ is the receiver noise. Then, the LMMSE channel estimators can be designed as
\begin{align}\label{LMMSE_alpha}
 \hat{{\bm \alpha}}_m= & \left[ \hat{\alpha}_{m,1},\cdots,\hat{\alpha}_{m,N}\right]^T \nonumber \\ = & {\bm R}_{\alpha_m}({\bm \Phi}^{\rm I})^H\left( {\bm \Phi}^{\rm I}{\bm R}_{\alpha_m}({\bm \Phi}^{\rm I})^H+\sigma^2{\bm I} \right)^{-1}\tilde{\bm y}_{m}^{\rm I}, ~ \forall m,
\end{align}
where ${\bm R}_{\alpha_m}=\mathbb{E}\left[ \bar{{\bm \alpha}}_m \bar{{\bm \alpha}}_m^{H}\right]$ is the covariance matrix of $\bar{{\bm \alpha}}_{m}$. According to \eqref{alpha}, the correlation coefficients can be estimated as
\begin{equation}\label{LMMSE_beta}
  \hat{\beta}_{m,n}=\frac{\hat{\alpha}_{m,n}}{\hat{\alpha}_{1,n}}, ~ m=2,\cdots,M, ~ n=1,\cdots,N.
\end{equation}

In Phase II, the noisy version of ${\bm \delta}$ given in (\ref{overallS2}) based on the estimation of $\bar{{\bm \alpha}}_1$ given in \eqref{LMMSE_alpha} is
\begin{align}\label{revsigS2_n1}
    {\bm \delta}=&\left[ (\tilde{\bm y}^{\rm II})^T, \hat{{\bm \alpha}}_1^T\right]^T \nonumber \\ = & \hat{{\bm\Theta}}\left[{\bm v}_{1,1}^T,\cdots,{\bm v}_{1,N}^T\right]^T+({\bm\Theta}-\hat{{\bm\Theta}})\left[{\bm v}_{1,1}^T,\cdots,{\bm v}_{1,N}^T\right]^T\nonumber \\ & +\left[ ({\bm z}^{\rm II})^T, {\bm e}^T\right]^T,
\end{align}
where $\hat{{\bm\Theta}}$ is in the same form as ${\bm\Theta}$ given in (\ref{Theta}), but with $\beta_{m,n}$'s replaced by the estimation $\hat{\beta}_{m,n}$'s given in (\ref{LMMSE_beta}), ${\bm z}^{\rm II}=\left[ {\bm z}^{T}_{\tau_1+1},\cdots, {\bm z}^{T}_{\tau_1+\tau_2} \right]^T$ is the BS noise, and ${\bm e}=\hat{{\bm \alpha}}_1-\bar{{\bm \alpha}}_1$ denotes the estimation error of $\bar{{\bm \alpha}}_1$ in Phase I. The error propagated from Phase I to Phase II, i.e., ${\bm\Theta}-\hat{{\bm\Theta}}$ and ${\bm e}$, makes it hard to obtain the LMMSE estimators of ${\bm v}_{1,n}$'s. Similar to \cite{b5}, in the following, we ignore these errors such that
\begin{align}\label{revsigS2_n}
    {\bm \delta}=\hat{{\bm\Theta}}\left[{\bm v}_{1,1}^T,\cdots,{\bm v}_{1,N}^T\right]^T+\left[ ({\bm z}^{\rm II})^T, {\bm 0}^T\right]^T.
\end{align}In practice, we can increase the power and the number of pilots in Phase I such that (\ref{revsigS2_n}) is a good approximation of (\ref{revsigS2_n1}). Based on (\ref{revsigS2_n}), the LMMSE estimators of ${\bm v}_{1,n}$'s are
\begin{align}\label{LMMSE_v}
  \left[\hat{\bm v}_{1,1}^T,\cdots,\hat{\bm v}_{1,N}^T\right]^T={\bm R}_{v}\hat{{\bm \Theta}}^{H}\left( \hat{{\bm \Theta}}{\bm R}_{v}\hat{{\bm \Theta}}^{H}+ {\bm R}_z\right)^{-1}{\bm \delta},
\end{align}
where ${\bm R}_z={\rm diag}\{\sigma^2{\bm I}, {\bm 0}_{N}\}$ with ${\bm 0}_N$ being the $N$ by $N$ all-zero matrix, and ${\bm R}_{v}$ is the covariance matrix of $\left[{\bm v}_{1,1}^T,\cdots,{\bm v}_{1,N}^T\right]^T$.

\begin{remark}
  In the LMMSE estimators in \eqref{LMMSE_alpha} and \eqref{LMMSE_v}, we need the knowledge about ${\bm R}_{\alpha_m}$ and ${\bm R}_{v}$ for ${\bm \alpha}_m$ and $\left[{\bm v}_{1,1}^T,\cdots,{\bm v}_{1,N}^T\right]^T$, respectively. This can be obtained via the distribution of $r_{n,m}$ and $t_{k,n}, \forall n,m,k$. For example, let us consider the channel model in \cite{b5}, where
  \begin{equation}
    r_{n,m}=\sum_{i=1}^{N}\tilde{r}_{i,m}({\bm C}^{\rm I})^{\frac{1}{2}}_{i,n}, \forall m,n,
  \end{equation}
  with $\tilde{r}_{i,m}\sim\mathcal{N}(0,\beta^{\rm  BI})$ is the i.i.d CSCG component and $({\bm C}^{\rm I})^{\frac{1}{2}}$ denotes the IRS transmit correlation matrix, and
  \begin{equation}
    t_{k,n}=\sum_{i=1}^{N}({\bm C}^{\rm I}_{k})^{\frac{1}{2}}_{i,n}\tilde{t}_{k,i}, \forall k,n,
  \end{equation}
   with $\tilde{t}_{k,i}\sim\mathcal{N}(0,\beta_k^{\rm IU})$ is also the i.i.d CSCG component and $({\bm C}^{\rm I}_{k})^{\frac{1}{2}}$ denotes the IRS receive correlation matrix for user $k$. In this case, the $c$th row and $l$th column entry of ${\bm R}_{\alpha_m}$ is given by
   \begin{equation}\label{R_alpham}
     {\bm R}_{\alpha_m,c,l}=p\beta^{\rm BI}({\bm C}^{\rm I})^T_{c,l}{\bm x}^T{\bm R}_{t_{cl}}{\bm x},~c,l=1,\cdots,N,
   \end{equation}
    where the $k_1$th row and $k_2$th column entry of ${\bm R}_{t_{cl}}$ is calculated as
    \begin{align}\label{R_t_cl_2}
        &{\bm R}_{t_{cl},k_1,k_2}= \notag\\
        &\begin{cases}
        $$\beta_{k}^{\rm IU}({\bm C}^{\rm I}_{k})^{\frac{1}{2}}_{c}\left( ({\bm C}^{\rm I}_{k})^{\frac{1}{2}}_{l}\right)^{H} $$, & \mbox{if $k_1=k_2=k$}, \\
        $0$, & \mbox{otherwise}.
        \end{cases}
    \end{align}
    $({\bm C}^{\rm I}_{k_1})^{\frac{1}{2}}_{c}$ and $({\bm C}^{\rm I}_{k_2})^{\frac{1}{2}}_{l}$ denotes the $c$th row of $({\bm C}^{\rm I}_{k_1})^{\frac{1}{2}}$ and $l$th row of $({\bm C}^{\rm I}_{k_2})^{\frac{1}{2}}$, respectively. And the sub-block of $(c-1)K+1$ to $(c K)$th rows and $(l-1)K+1$ to $(lK)$th columns of ${\bm R}_{v}$, i.e., ${\bm R}_{v,c,l}$ is given by
    \begin{equation}\label{R_vn}
        {\bm R}_{v,c,l}=\beta^{\rm BI}({\bm C}^{\rm I})^T_{c,l}{\bm R}_{t_{cl}},~~c,l=1,\cdots,N. \notag
    \end{equation}

\end{remark}

\section{Numerical Results}\label{sec:Numerical Results}
In this section, we present numerical results to show the gain of our scheme over that proposed in \cite{b5} in the case with BS noise. We assume that the numbers of IRS elements, BS antennas, and users are $N=32$, $M=32$, and $K=8$, respectively. All the channels are generated in the same way as those used in \cite{b5}. The identical transmit power of all the users is $23$ dBm. The channel bandwidth is assumed to be $1$ MHz, and the power spectrum density of the noise at the BS is $-169$ dBm/Hz. We use the normalized MSE (NMSE) as the performance indicator. For any vector ${\bm a}$, if its estimation is $\hat{\bm a}$, then the NMSE is defined as
\begin{equation}\label{NMSE}
  {\rm NMSE}=\frac{\mathbb{E}\left[\parallel\hat{\bm a}-{\bm a}\parallel^{2}\right]}{\mathbb{E}\left[\parallel{\bm a}\parallel^{2}\right]}.
\end{equation}In our scheme, ${\bm a}=[{\bm v}_1^T,\cdots,{\bm v}_M^T]^T$, and in the benchmark scheme proposed in \cite{b5}, ${\bm a}=[{\bm g}_1^T,\cdots,{\bm g}_K^T]^T$. In the following, we will compare the NMSE performance between our proposed scheme and the benchmark scheme in \cite{b5}.\footnote{In \cite{b5}, there are three phases in the protocol, because the direct user-BS channels are estimated in the first phase. To make it consistent with our work, in the following, we will ignore their phase to estimate the direct channels, and call their phase to estimate ${\bm g}_{1,n}$'s as Phase I, and their phase to estimate $\lambda_{k,n}$'s as Phase II.}

Fig. \ref{Phase I} shows the NMSE performance over the pilot sequence length, which ranges from $40$ to $100$ symbols. According to (\ref{eqn:tau1}) and (\ref{tau2}), the minimum numbers of pilot symbols required by Phases I and II are $32$ and $7$, respectively. Here, we assume that all the extra pilot symbols are allocated to Phase I. It is observed that with imperfect estimation in Phase I, the overall NMSE achieved by our scheme after two phases is smaller than that achieved by the scheme in \cite{b5}. To analyze the reason, we also provide the NMSE performance when the estimation of Phase I is assumed to be perfect. In this case, it is observed that the NMSE achieved by \cite{b5} is even smaller than that of our scheme. This indicates that  our gain in terms of overall NMSE over \cite{b5} comes from Phase I, where the transmit power of all the users is utilized under our scheme, but the power of the other $K-1=7$ users is wasted under the scheme in \cite{b5}. In Fig. \ref{Phase I}, we also plot the NMSE of the scheme in \cite{b5} when the typical user can transmit with a power $Kp=8p$ in Phase I. In this case, it is observed that indeed, if the total transmit power is the same in Phase I, the NMSE performance will be very close in both schemes. Similar observations can be also found in Fig. \ref{Phase II}, where all the extra pilot symbols are allocated to Phase II.

\begin{figure}[t]
	\centering
	\includegraphics[width=9cm]{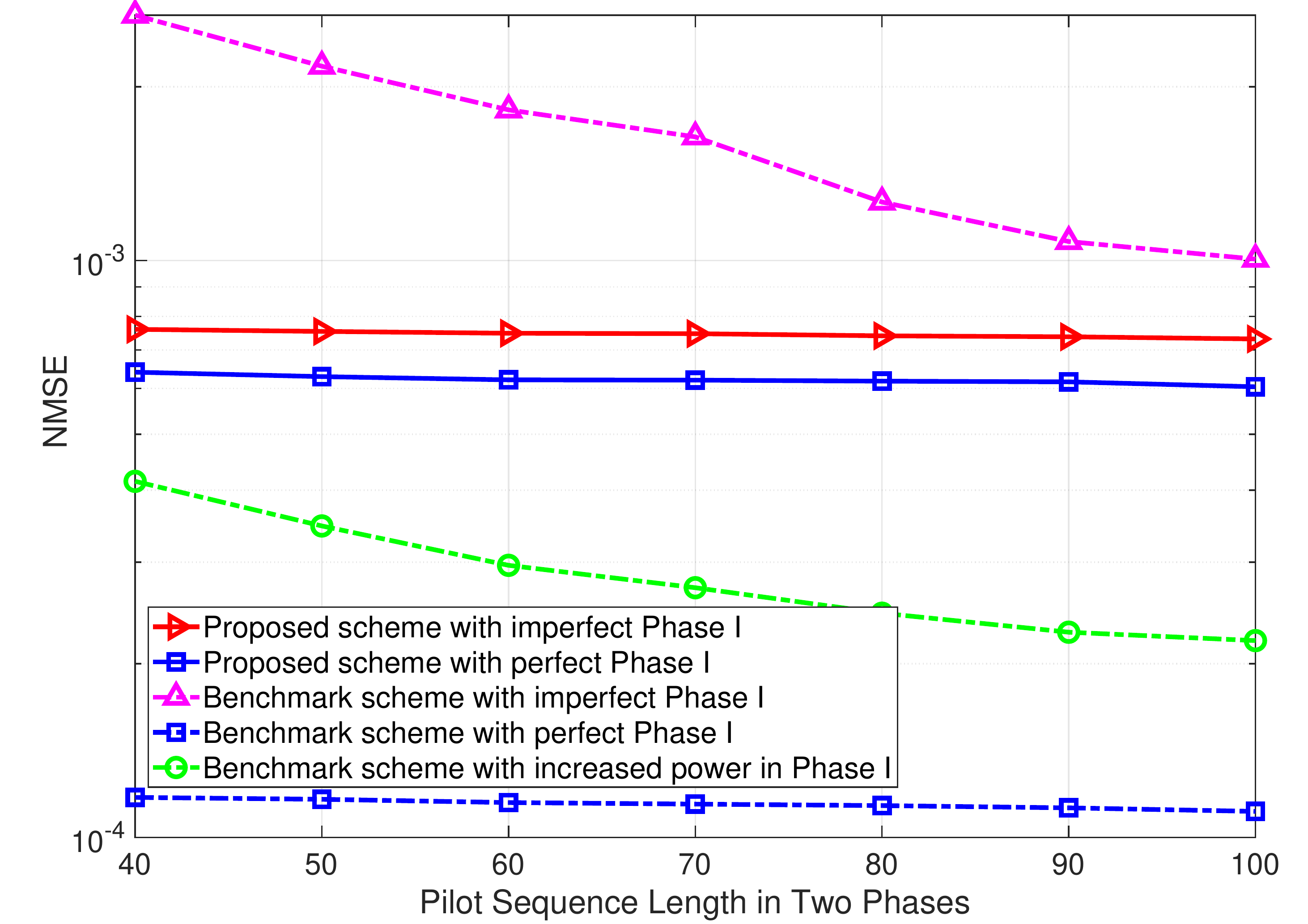}\vspace{-10pt}
	\caption{Performance comparison when extra pilots are allocated to Phase I.} \vspace{-10pt} \label{Phase I}
\end{figure}

\begin{figure}[t]
	\centering
	\includegraphics[width=9cm]{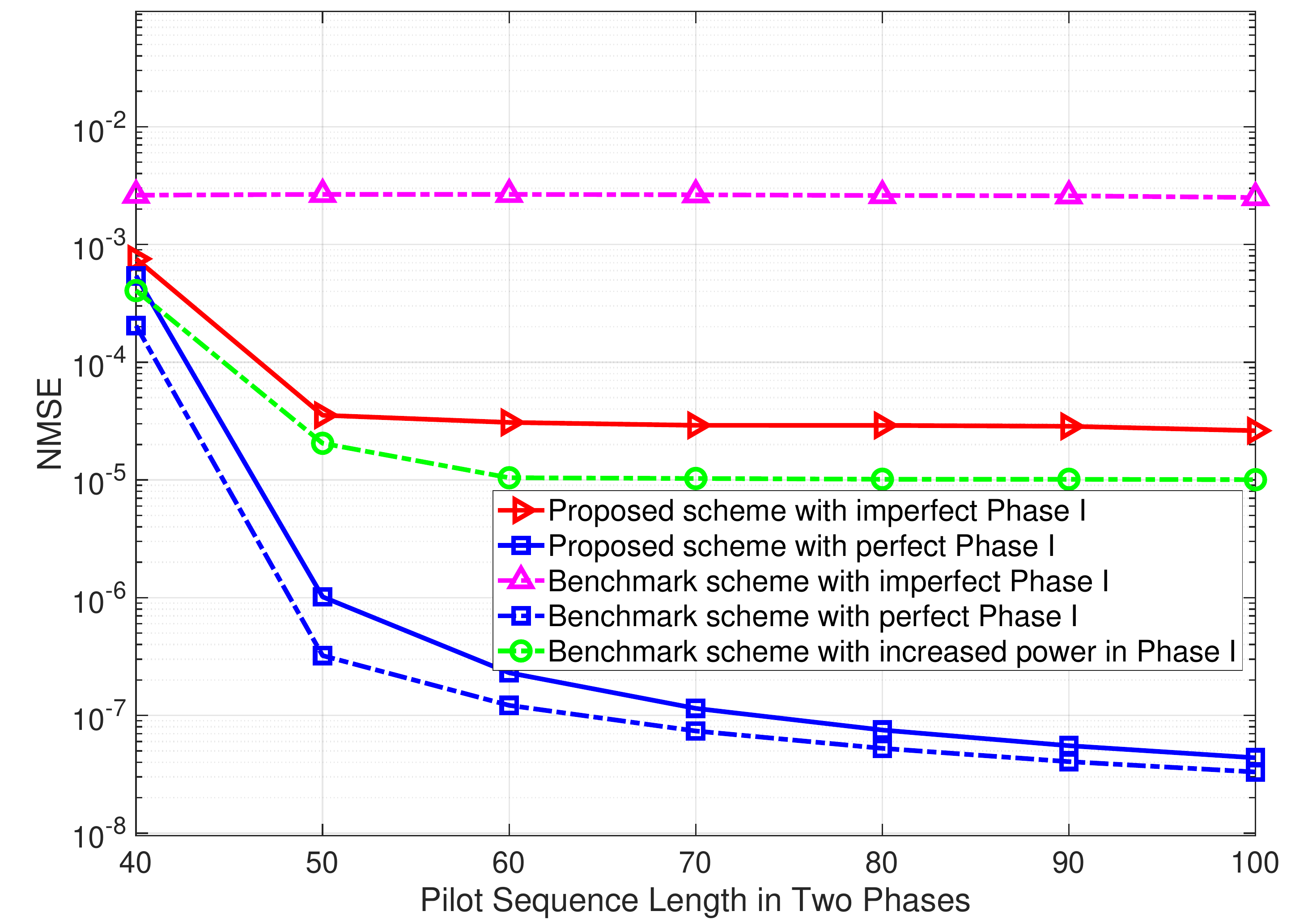}\vspace{-10pt}
	\caption{Performance comparison when extra pilots are allocated to Phase II.}\label{Phase II} \vspace{-15pt}
\end{figure}

\begin{figure}[t]
  \centering
  \includegraphics[width=8.5cm]{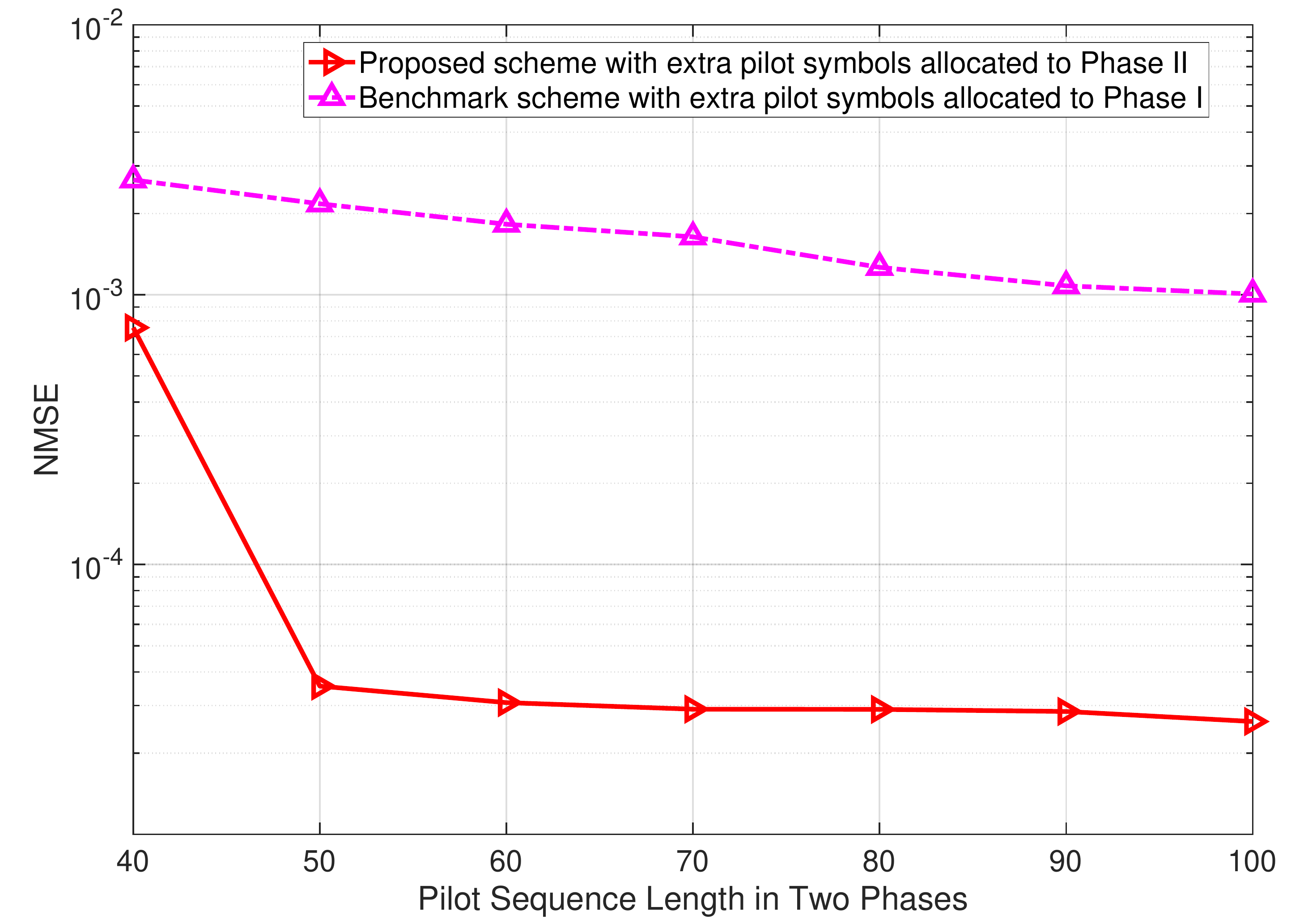}\vspace{-10pt}
  \caption{Performance comparison between the best strategies.}\label{best_strategy} \vspace{-15pt}
\end{figure}

Finally, notice from Fig. \ref{Phase I} and Fig. \ref{Phase II} that for our proposed scheme, it is optimal to allocate all extra pilots to Phase II, as Phase I is already very good; while for the scheme in \cite{b5}, it is optimal to allocate all extra pilots to Phase I to improve its performance. In Fig. \ref{best_strategy}, we compare the NMSE of the two schemes with their optimal pilot allocation strategies as described above, over the total pilot sequence length in the two phases. It is observed that under the best strategies for both schemes, our proposed scheme significantly outperforms the scheme in \cite{b5}, due to the full exploitation of all user energy.

\section{Conclusions}\label{sec:Conclusions}
In this paper, we revealed a new correlation among the cascaded user-IRS-BS channels in the uplink communication of an IRS-assisted multi-user network. Specifically, the cascaded user-IRS-BS channel vector for a BS antenna is a scaled version of that for any other antenna because of the common user-IRS channel. Based on this property, we proposed a novel two-phase channel estimation protocol, where the scaling coefficients of all the other antennas' channels are estimated in Phase I, and the cascaded channel of the typical antenna is estimated in Phase II. In the case without BS noise, the minimum number of time instants required for perfect channel estimation was characterized, which was shown to be the same as that achieved by the strategy proposed in \cite{b5}. In the case with BS noise, the LMMSE channel estimators were proposed for both Phase I and Phase II. Via simulation, it was shown that the MSE achieved by our proposed scheme is much smaller than that achieved by the scheme proposed in \cite{b5} due to the better utilization of user energy.

\appendix
\subsection{Proof of Theorem \ref{theorem1}}\label{appendix1}
We prove the theorem in the following two cases: 1) $M\geq N$ and 2) $M<N$. In the case of $M\geq N$, we first prove that there exist a unique solution to \eqref{overallS2} only if $\tau_2\geq K-1$. Define
\begin{equation}\label{eta}
  \eta_{n,i}=\sqrt{p}{\bm v}_{1,n}^{T}{\bm x}_i, n=1,\cdots, N, i=\tau_1+1,\cdots,\tau_1+\tau_2.
\end{equation}
Then it can be shown that the received signal by antenna $m$ in \eqref{revS2m} can be expressed as
\begin{equation}\label{revS2m'}
  \tilde{y}_{m,i}^{\rm II}=\sum_{n=1}^{N}\phi_{n,i}\beta_{m,n}\eta_{n,i}, i=\tau_1+1,\cdots,\tau_1+\tau_2.
\end{equation}
With $\beta_{m,n}$'s estimated in Phase I, for each time instant $i$, there exist $N$ variables $\eta_{n,i}$'s and $M$ linear equations as given in \eqref{revS2m'}. As a result, in the case of $M\geq N$, $\eta_{n,i}$'s can be perfectly estimated. Then, with the knowledge of $\eta_{n,i}$'s and $\bar{\alpha}_{1,n}$'s, we can estimate ${\bm v}_{1,n}$'s from the following equations
\begin{align}\label{eta&alpha}
  &\eta_{n,i}=\sqrt{p}{\bm v}_{1,n}^{T}{\bm x}_{i}, \bar{\alpha}_{1,n}=\sqrt{p}{\bm v}_{1,n}^{T}{\bm x},\\
  &n=1,\cdots, N, i=\tau_1+1,\cdots,\tau_1+\tau_2,\notag
\end{align}
which characterizes a linear system with $KN$ variables and $(\tau_2 N+N)$ equations. Therefore, a unique solution to \eqref{eta&alpha} exists only when the number of equations is no smaller than the number of variables, i.e. $\tau_2\geq K-1$.

Next, we show that if $\tau_2 =K-1$, there always exists a unique solution to \eqref{overallS2} in the case of $M\geq N$. Specifically, we set $\phi_{n,i}$'s in \eqref{revS2m'} as one, $\bm x$ as an all one vector and the stacked vectors $[{\bm x}_{\tau_1+1},\cdots,{\bm x}_{\tau_1+K-1}]$ as the $2$ to $K$ columns of a $K \times K$ DFT matrix. Since ${\bm\beta}_n$'s are linearly independent with each other with probability one , $\eta_{n,i}$'s can be perfectly estimated as
\begin{align}\label{es_eta}
  &[\eta_{1,i},\cdots,\eta_{N,i}]^T=[\tilde{y}^{\rm II}_{1,i},\cdots,\tilde{y}^{\rm II}_{M,i}]^T{\bm\beta}^\dagger,\\
  &i=\tau_1+1,\cdots,\tau_1+K-1,\notag
\end{align}
where
\begin{equation}\label{Beta}
  {\bm\beta}=\left[
  \begin{array}{ccc}
    \beta_{1,1} & \cdots & \beta_{1,N} \\
    \vdots & \ddots & \vdots \\
    \beta_{M,1} & \cdots & \beta_{M,N}
  \end{array}\right],
\end{equation}
and for any matrix $\bm A\in\mathbb{C}^{s\times t}$ with $s\geq t$, ${\bm A}^{\dagger}=({\bm A}^H{\bm A})^{-1}{\bm A}^H$ denotes its pseudo-inverse matrix. Then, as a result, if $\tau_2=K-1$, there exists a unique solution to \eqref{eta&alpha}, equivalently to  \eqref{overallS2} given as follows
\begin{align}\label{es_v1_M>N}
{\bm v}_{1,n}^T&=\frac{1}{\sqrt{p}}[\bar{\alpha}_{1,n},\eta_{n,\tau_1+1},\cdots,\eta_{n,\tau_1+K-1}] \notag\\
&[{\bm x},{\bm x}_{\tau_1+1},\cdots,{\bm x}_{\tau_1+K-1}]^{\dagger}, n=1,\cdots,N.
\end{align}

Next, consider the case of $M<N$. Since the number of variables and equations in \eqref{overallS2} are $KN$ and $\tau_2M+N$, respectively, there exists a unique solution to \eqref{overallS2} only if the number of equation is no smaller than that of variables, i.e., $\tau_2\geq\lceil\frac{(K-1)N}{M}\rceil$.

Next, we show that when $\tau_2=\lceil\frac{(K-1)N}{M}\rceil$, there always exists a solution to \eqref{overallS2} in the case of $M<N$. Specifically, we first set the pilot signal in Phase I as
\begin{equation}
  {\bm x}=[1,\cdots,1]^T.
\end{equation}
Then, in Phase II, user $K$ does not transmit the pilot signal, i.e., ${\bm x}_{i,K}=0, i=\tau_1+1, \cdots, \tau_1+\tau_2$, while the pilot signals of user $1$ to user $K-1$, i.e., ${\bm x}_{i,k}, i=\tau_1+1,\cdots,\tau_1+\tau_2,k=1,\cdots,K-1$, as well as the IRS reflecting coefficients, i.e., $\phi_{n,i}, n=1,\cdots,N, i=\tau_1+1,\cdots,\tau_1+\tau_2$, are set as Theorem $2$ in \cite{b5}. Then, we construct a new matrix $\tilde{\bm \Theta}\in\mathbb{C}^{(\tau_2M+N)\times KN}$ by putting the $[(n-1)K+k]$th column of $\bm \Theta$ in \eqref{Theta} into the $[(k-1)N+n]$th column of $\tilde{\bm \Theta}$, $\forall n,k$. Since changing the order of columns of a matrix does not change its rank, i.e., ${\rm rank}(\tilde{\bm \Theta})={\rm rank}(\bm \Theta)$, in the following, we show that under the above construction, we have ${\rm rank}(\tilde{\bm \Theta})=KN$ when $\tau_2=\lceil\frac{(K-1)N}{M}\rceil$. Specifically, $\tilde{\bm \Theta}$ can be re-expressed as follows:
\begin{equation}\label{Theta_tilde_block}
  \tilde{\bm \Theta}=\left[
  \begin{array}{cc}
    \tilde{\bm\Theta}_{s} & {\bm O}_{\tau_2M \times N}\\
    \{{\bm I}_N\}_{K-1} & {\bm I}_N
  \end{array}\right],
\end{equation}
where $\tilde{\bm\Theta}_{s}$ is the first $\tau_2M$ rows and first $(K-1)N$ columns of $\tilde{\bm\Theta}$, ${\bm I}_N$ is the identity matrix of dimension $N$, and $\{{\bm I}_N\}_{K-1}=[{\bm I}_N,\cdots,{\bm I}_N]\in\mathbb{C}^{N\times (K-1)N}$. According to Theorem $2$ in \cite{b5}, ${\rm rank}(\tilde{\bm \Theta}_s)=(K-1)N$ when $\tau_2=\lceil\frac{(K-1)N}{M}\rceil$. Next, we derive the rank of $\tilde{\bm\Theta}$. It is observed from \eqref{Theta_tilde_block} that each of the first $\tau_2 M$ rows of $\tilde{\bm \Theta}$, whose last $N$ elements are all zero, is linearly independent of the last $N$ rows of $\tilde{\bm\Theta}$, i.e., $\{{\bm I}_N\}_{K}$. In other words, the row space defined by the first $\tau_2 M$ rows in $\tilde{\bm\Theta}$ does not intersect with that defined by the last $N$ rows in $\tilde{\bm\Theta}$. In this case, ${\rm rank}(\tilde{\bm \Theta})={\rm rank}([\tilde{\bm \Theta}_s~~{\bm O}_{\tau_2M\times N}])+{\rm rank}(\{{\bm I}_N\}_{K})=KN$ \cite{Rank}. Therefore, for the case of $M<N$, when $\tau_2=\lceil\frac{(K-1)N}{M}\rceil$, there exists a unique solution to \eqref{overallS2} given by
\begin{equation}\label{es_v1_M<N}
  \left[{\bm v}_{1,1}^{T},\cdots,{\bm v}_{1,N}^{T}\right]^{T}={\bm\Theta}^{\dagger}{\bm \delta}.
\end{equation}

Theorem $1$ is thus proved.

\bibliographystyle{IEEEtran}
\bibliography{IRS_channel_estimation}

\end{document}